\newtheorem*{proposition*}{Proposition.}
\title{Which convex polyhedra can be made by gluing regular hexagons?
}
\author{Elena Arseneva\inst{1} 
\and Stefan Langerman\inst{2} 
}
\institute{
St. Petersburg State University, 
Saint-Petersburg, Russia
\email{e.arseneva@spbu.ru}
\and
Computer Science Department, Universit\'e libre de Bruxelles (ULB), Belgium
\email{stefan.langerman@ulb.ac.be}
}
\authorrunning{E. Arseneva and S. Langerman}
\begin{document}
\maketitle

\begin{abstract}
Which convex 3D polyhedra can be obtained by gluing several regular hexagons edge-to-edge?
It turns out that there are only 15 possible types of shapes, 
5 of which are doubly-covered 2D polygons.
We give 
examples for most of 
them, including all simplicial and all flat shapes, and give a characterization for the latter ones. 
It is open whether the remaining  can be realized.  
\end{abstract}

\section{Introduction}
Given a 2D polygon $P$, which convex 3D polyhedra can be obtained by folding it and gluing its boundary to itself?
Alexandrov's theorem~\cite{alex} states that for any gluing pattern homeomorphic to a sphere that does not yield 
a total facial angle of more than 
$2\pi$ at any point, there is a unique 3D convex polyhedron that can be constructed in this manner.  
Nevertheless, answering the above question requires checking exponentially many gluing patterns~\cite{DDLO02}. 
Finding the unique 3D polyhedron for a given gluing is a notoriously difficult problem. The best known approximation 
algorithm has pseudopolynomial time complexity~\cite{kpd09-approx}.

There are two ways to restrict the setting: to consider a particular polygon (e.g., all regular polygons~\cite{DO07}, and the Latin cross~\cite{ddlop99}
 were studied), or to consider only  
\emph{edge-to-edge} gluings, 
where an edge of $P$ needs to be glued to an entire other edge of $P$~\cite{DO07,lo96-dynprog}. 

We are interested in gluing together     
several copies of a same regular polygon edge-to-edge, 
thus fusing these two settings, while at the same time extending and restricting each of them. 
The case of regular $k$-gons for $k>6$ is trivial. Indeed, since  gluing three $k$-gons
in one point would violate the above Alexandrov's condition,  
 the only two possibilities are: two $k$-gons glued together and forming 
a doubly covered $k$-gon, or one $k$-gon folded in half (if $k$ is even). 
Thus the first interesting case is $k=6$, and we study it here.  
Note that the
problem we are solving here for $k=6$ is actually decidable (in constant
time) for any constant $k$ by Tarski's theorem,
 but the problem is
probably too large even for $k=6$ to be handled by any existing
computer.

\section{Gaussian Curvature}

Let $P$ be a convex 3D polyhedron. The
Gaussian curvature at a vertex $v$ of $P$ equals $2\pi - \sum_{j=1}^t{\alpha^v_j}$, where $t$ is the number of faces of $P$ incident to $v$, and  $\alpha^v_j$ 
is the angle of the $j$-th face incident to $v$.  Since $P$ is convex, the
Gaussian curvature at each vertex of $P$ is non-negative.

\begin{theorem}[Gauss-Bonnet, 1848]
\label{th:gauss-bonnet}
The total sum of the Gaussian curvature at each vertex of a 3D polyhedron $P$ equals $4\pi$.
\end{theorem}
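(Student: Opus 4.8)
The plan is to sum the Gaussian curvature over all vertices and then reorganize the total face-angle contribution by \emph{faces} rather than by vertices, so that it can be evaluated from purely combinatorial data together with Euler's formula. Writing $V$, $E$, $F$ for the numbers of vertices, edges, and faces of $P$, and letting $K(v)$ denote the Gaussian curvature at a vertex $v$, I would begin directly from the definition and compute
$$\sum_{v} K(v) = \sum_{v}\Bigl(2\pi - \sum_{j=1}^{t}\alpha^v_j\Bigr) = 2\pi V - \Sigma,$$
where $\Sigma$ is the sum of \emph{all} face angles of $P$, taken over every face and every corner of that face.

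Next I would re-sum $\Sigma$ one face at a time. Each face of $P$ is a convex, hence simple, polygon; if it has $k_f$ sides, its interior angles sum to $(k_f-2)\pi$. Summing over all faces gives $\Sigma = \sum_f (k_f-2)\pi = \pi\sum_f k_f - 2\pi F$. Since every edge of $P$ borders exactly two faces, the quantity $\sum_f k_f$ counts each edge twice, so $\sum_f k_f = 2E$ and therefore $\Sigma = 2\pi E - 2\pi F$. Substituting back yields
$$\sum_{v} K(v) = 2\pi V - 2\pi E + 2\pi F = 2\pi (V - E + F).$$

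The final step is to invoke Euler's polyhedron formula $V - E + F = 2$, which delivers the claimed total of $4\pi$. The only genuine obstacle is justifying this formula: it rests on the fact that the boundary of a convex polyhedron is homeomorphic to a sphere, i.e.\ has genus zero. I expect to treat Euler's formula as a known topological fact — for convex $P$ it follows, for instance, from radially projecting the surface onto a circumscribed sphere — though one could alternatively prove it by induction on a spanning structure of the edge graph. I would also note that convexity of $P$ is exactly what guarantees each face is a simple polygon, so that the $(k_f-2)\pi$ angle identity used above is legitimate.
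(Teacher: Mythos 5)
Your argument is correct and complete: it is the standard Descartes/discrete Gauss--Bonnet computation, regrouping the total angle sum by faces, using $\sum_f k_f = 2E$, and finishing with Euler's formula $V-E+F=2$. Note, however, that the paper offers no proof of this statement at all --- it is quoted as a classical result (attributed to Gauss--Bonnet, 1848) and used as a black box to bound the number and curvatures of vertices of the glued polyhedra --- so there is no in-paper argument to compare yours against. Your write-up correctly isolates the two hypotheses doing the real work: convexity (or at least simplicity of each face) to justify the $(k_f-2)\pi$ interior angle sum, and genus zero to justify $V-E+F=2$; for a surface of genus $g$ the same computation would give total curvature $2\pi(2-2g)$, so flagging the topological input is not pedantry but the actual content of the theorem.
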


Let $P$ be a convex polyhedron
 obtained by gluing 
several regular hexagons edge-to-edge. 
 Vertices of $P$ are vertices of the hexagons, and the sum of facial angles around a vertex $v$ of $P$ 
equals $2\pi/3$ (the interior angle of the regular hexagon) times 
the number of hexagons glued together at $v$. Since the Gaussian curvature at $v$ is in $(0,2\pi)$, the number of hexagons glued at $v$ can be either one or two, implying  
the Gaussian curvature of $v$ to be respectively $4\pi/3$ or $2\pi/3$. 
 If three hexagons are glued 
at a point $p$, $p$ has zero Gaussian curvature, and thus is a (flat) point on the surface of $P$. Thus $P$ has at most 6 vertices.

\section{Doubly-covered polygons}
There are $4$ 
combinatorially different  doubly-covered 
plane polygons that can be obtained by gluing hexagons. The quadrilaterals come in 2 variants
depending on the sequence of their angles. Thus we list $5$ types of polygons. 
We list all the types below, and give an example for each type in Figure~\ref{fig:polygons}.

\begin{figure*} 
\centering 
\includegraphics[scale=0.85]{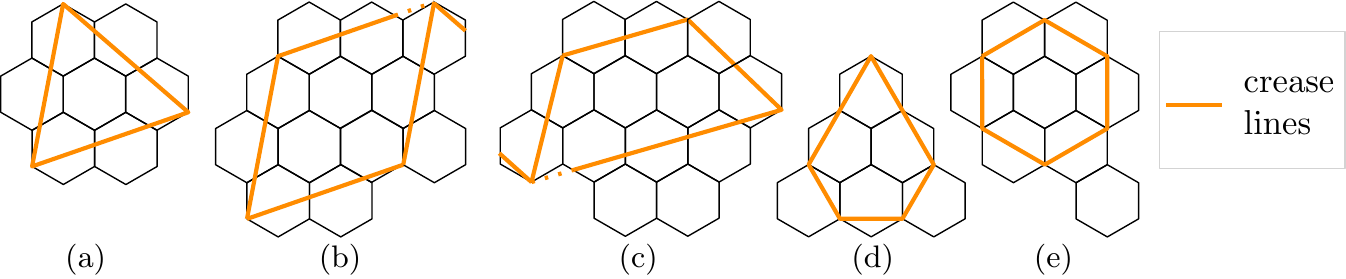}
\caption{Examples of doubly-covered polygons of types (a)-(e):  
Their nets 
and crease lines} 
\label{fig:polygons}
\end{figure*}

\begin{itemize}

\item[(a)] Equilateral triangle. 

\item[(b)]  
Quadrilateral with angles $\pi/3, 2\pi/3, \pi/3, 2\pi/3$. 
 This is an isosceles parallelogram.

\item[(c)] Quadrilateral with angles $\pi/3, \pi/3, 2\pi/3, 2\pi/3$. 
This is a trapezoid.

\item[(d)] Pentagon with 1 angle $\pi/3$, and 4 angles $2\pi/3$. 

\item[(e)] Hexagon with 6 angles $2\pi/3$. 
\end{itemize}

We now give a complete characterization of such shapes:
\begin{theorem}
Polygons of type (a)--(e), that can be drawn on the hexagonal grid, are exactly the polygons, doubly-covered versions of which 
 can be obtained by gluing regular hexagons. 
 
\end{theorem}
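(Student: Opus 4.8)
The plan is to prove the two inclusions separately, writing $P$ for the polygon in question. First I would record that the direction ``gluable $\Rightarrow$ type (a)--(e)'' is already contained in the curvature discussion: if the doubly-covered $P$ is glued from regular hexagons, every vertex of $P$ has one or two hexagons on each side, so its interior angle is $\pi/3$ or $2\pi/3$, and the relation $2a+b=6$ between the number $a$ of $\pi/3$-angles and the number $b$ of $2\pi/3$-angles forces $P$ into one of the five listed types. Thus it remains to show, among type (a)--(e) polygons, that \emph{drawable on the hexagonal grid} is equivalent to \emph{having a gluable double}.

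For necessity (gluable $\Rightarrow$ drawable) I would develop the tiling. The glued surface consists of two isometric copies $P^+,P^-$ of $P$ meeting along $\partial P$, and by the curvature analysis its only cone points sit at the vertices of $P$; hence the interior of $P^+$ is flat. Fixing one hexagon and unfolding its neighbours therefore gives an isometric immersion of the hexagonal tiling of $P^+$ into the plane, and since the faces are \emph{regular} hexagons glued along full edges, the image is part of the standard honeycomb grid. The vertices of $P$, being the cone points, land on grid vertices. Reading off the angle at each: a $2\pi/3$-vertex carries exactly one hexagon corner on the $P^+$ side, so the two edges of $P$ there run along grid edges, while a $\pi/3$-vertex carries a bisected hexagon corner, so its edges run along the grid's $30^\circ$ symmetry lines; as each edge of $P$ is a straight crease it develops to a straight grid segment. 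This exhibits $P$ as drawn on the grid.

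For sufficiency (drawable $\Rightarrow$ gluable) I would construct the gluing directly from a grid drawing of $P$. Across the interior of each edge of $P$ the surface is flat, and the honeycomb grid is invariant under reflection in every one of its $30^\circ$ symmetry lines; since each edge of $P$ lies along such a line, I can build $P^-$ as the mirror image of the grid over $P^+$, so that the hexagons match across the crease and the gluing stays edge-to-edge. At each vertex I instead create the prescribed cone by gluing one hexagon corner (angle $2\pi/3$) at a $\pi/3$-vertex and two (angle $4\pi/3$) at a $2\pi/3$-vertex. A final check that the curvatures $4\pi/3$ and $2\pi/3$ sum to $4\pi$ (Gauss--Bonnet) and that no angle exceeds $2\pi$ confirms, via Alexandrov's theorem, that the result is precisely the doubly-covered $P$.

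The step I expect to be the main obstacle is the sufficiency construction, and specifically the bookkeeping of hexagons cut by $\partial P$. A boundary-crossing hexagon must have its two pieces glued to a matching pair across the crease, with none left unpaired, and the reflected sectors meeting at a vertex of $P$ must assemble into exactly the cone of angle $2\pi/3$ or $4\pi/3$ with whole hexagons. I would lean on the reflection symmetry of the grid to make these matchings automatic and thereby handle the five types (a)--(e) uniformly rather than one at a time.
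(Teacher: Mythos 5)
Your forward direction matches the paper's in substance: the angle count forcing types (a)--(e) is the same computation (your $2a+b=6$ is the paper's $t=6-i$), and developing the hexagon decomposition of $P^+$ into the plane to land the vertices on the honeycomb is a more detailed version of the paper's one-line argument. The genuine gap is in sufficiency. Your construction rests on the claim that every edge of a grid-drawn polygon of type (a)--(e) lies along a mirror line of the honeycomb, so that reflecting the grid in each crease returns the grid and the half-hexagons pair up automatically. That claim is false. Take the equilateral triangle whose vertices are three alternate vertices of a single grid hexagon (side length $\sqrt{3}$): it is of type (a), it is drawn on the grid, and its double cover \emph{is} realizable --- glue together the two hexagon edges meeting at each of the three remaining vertices, i.e.\ fold the three ears of the hexagon onto the back of the central triangle. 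But each side of this triangle is a chord joining two hexagon vertices that are two apart, and the line through such a chord is not a mirror line of the honeycomb (it reflects the hexagon's center onto one of its vertices). So your reflected grid does not coincide with the original, and the pairing of boundary-crossing hexagon pieces --- precisely the step you flagged as the main obstacle --- breaks down.

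The condition you actually need is weaker than ``each edge-reflection preserves the grid'': it is that the images of the grid under the reflections across all the edges of $P$ agree with \emph{one another}, so that a single honeycomb can be installed on $P^-$ that completes the cut cells along every crease simultaneously. This is what the paper establishes, by a different mechanism: it reflects $P$ once, across a single side $s$, and shows that the $i$-th side of $P$ and its mirror image are rotated relative to each other by $2(\alpha_1+\cdots+\alpha_{i-1})$, which is always a multiple of $2\pi/3$ because every $\alpha_j$ is $\pi/3$ or $2\pi/3$. Since the honeycomb is invariant under rotation by $2\pi/3$ about any of its vertices, corresponding sides then cut the grid cells identically. In short, it is the $3$-fold rotational symmetry of the grid about its vertices, not reflection symmetry in the crease lines, that makes the boundary bookkeeping work; as written, your sufficiency argument would wrongly exclude realizable polygons such as the triangle above.
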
 
\begin{proof}
Slightly abusing notation, we do not distinguish between polygons and their drawings on the grid.

Consider a polygon $P$ whose doubly-covered version $Q$ can be obtained by gluing regular hexagons.
We can draw on $P$ the hexagons from which it is glued, 
and this will correspond to a drawing of $P$ on the hexagonal grid, i.e, the vertices of $P$ coincide with the vertices of the grid.

Every vertex $v$ of $P$ corresponds to a vertex $v'$ of $Q$, and the internal angle of $P$ at $v$ is exactly half of the total angle of $Q$ at $v'$. Since $Q$ is obtained by gluing regular hexagons edge to edge, the total angle at $v'$ is $2\pi/3$ or $4\pi/3$. Thus every internal angle of $P$ is $\pi/3$ or $2\pi/3$. 
Now let $i$ be the number of vertices in $P$, and let $t$ be the number of vertices of $P$ with internal angle $\pi/3$. 
Since all internal angles of $P$ sum up to $(i-2)\pi$, we obtain an equation
 $\pi/3*t + 2\pi/3(i-t) = (i-2)*\pi$, which for each fixed $i = 3,5,6$ gives us the unique number of angles of value 
$\pi/3$ the polygon has. Because of the symmetry, for $i = 3,5,6$ there is only one possible shape for each of these cases: respectively type (a), (d), and (e).
 The case of $i=4$ has two possibilities, depending on whether the two angles of the same value are adjacent to teach other or not.  For $i>6$ the value of $t$ is negative.  

Now let us prove the other direction of the statement. Consider a polygon $P$ of type (a)-(e), that can be drawn on a hexagonal grid. 
Let $P'$ be a copy of $P$ mirrored with respect to some side $s$ of $P$.  See Figure~\ref{fig:drawing}.
 It is enough to prove that  each vertex of $P'$ coincides with some vertex of the grid, 
and (more strongly) that each side of $P$ and its counterpart in $P'$ break the grid cells exactly the same way (same as above). 

Consider the pairs of corresponding sides of $P$ and $P'$ one by one in the counterclockwise order. 
First, for the side $s$ of $P$ and the side of $P'$ that coincides with $s$, the statement is true by construction. 
The next pair of sides are two line segments rotated w.r.t.\ each other by the angle $2\alpha_1$, 
where $\alpha_1$ is the interior angle of $P$ adjacent to the side $s$ (counterclockwise). 
Further, each $i$-th pair of sides will be rotated by additional value of $2\alpha_i$.  
Since each angle $\alpha_i$ is either $\pi/3$ or $2\pi/3$, each pair of sides is rotated w.r.t. 
each other by the angle $k \cdot 2\pi/3$, for some $k \in \{0,1,2\}$. 
Since the angle of the regular hexagon in $2\pi/3$, the statement holds for each pair of sides.

\qed 
\end{proof}

\begin{figure}[h]
\centering
\includegraphics[scale = 0.75]{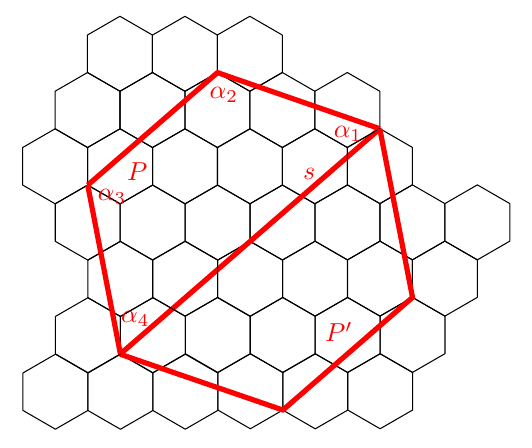}
\caption{Illustration for the proof of the proposition: here $P$ is of type~(c)}
\label{fig:drawing}
\end{figure}

It is interesting to count the number of distinct polygons of a fixed type as a function the number $n$ of hexagons glued to produce the shape. 
Observe that the number in question is polynomial in $n$, because it is composed of two polygons  with at most 6 vertices of diameter at most $n$, drawn on a hexagonal grid. Obtaining a tighter bound is an open problem (see Open Problem~3 in the last section).

\section{Skeletons of Non-flat Polyhedra} 
\begin{figure}[h!]
\centering 
\includegraphics[scale=0.9]{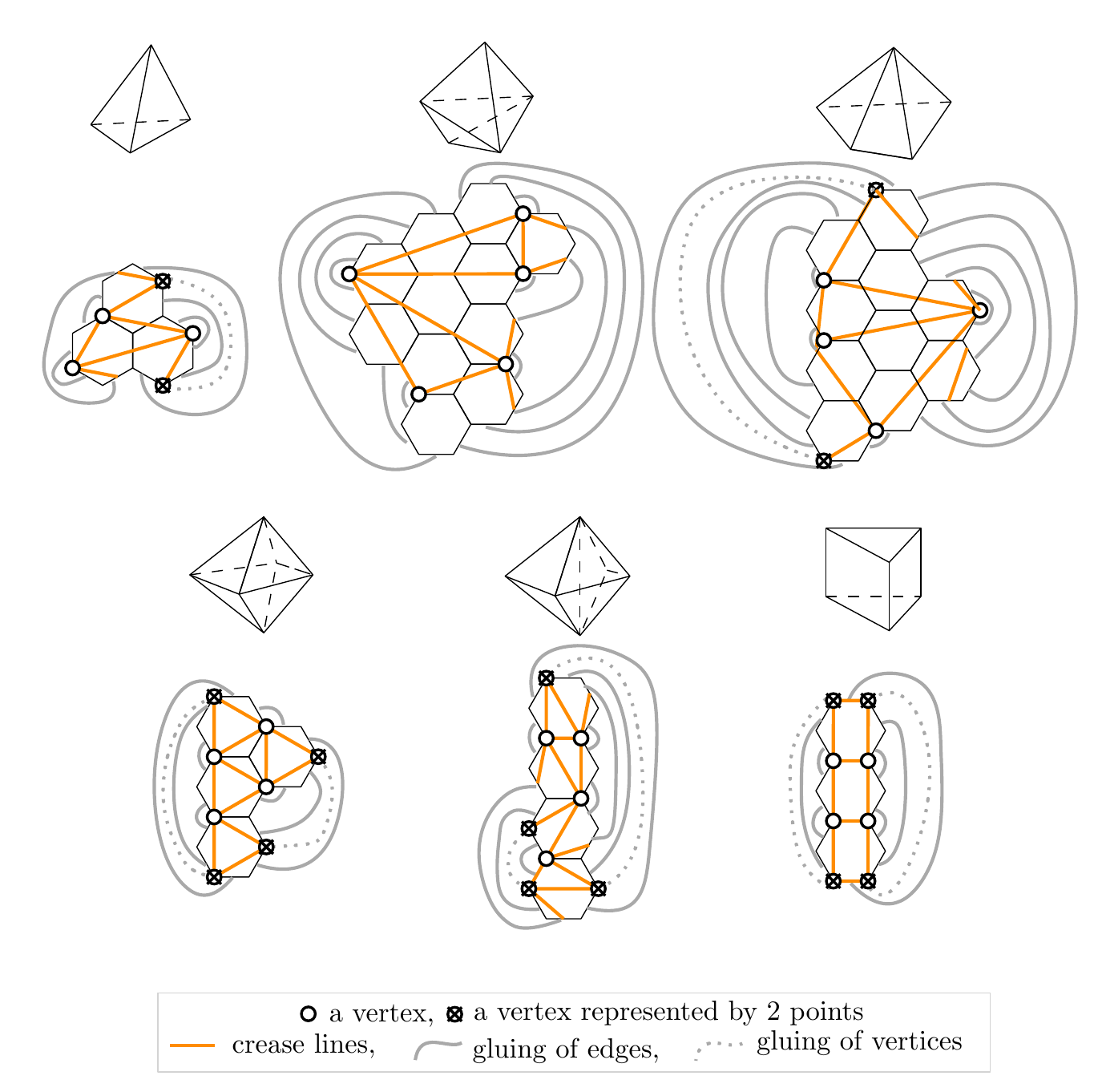}
\caption{Examples of polyhedra (i)--(vi). Above: graphs of their skeletons. 
Below: their nets, crease lines, and gluing rules.}
\label{fig:polyhedra}
\end{figure}

There are $10$ distinct 3-connected simple planar graphs of at most $6$ vertices; these 
are all combinatorially different graph structures of convex polyhedra of at most $6$ vertices.  

Below we give  examples for different polyhedra obtained by gluing regular hexagons.  
Namely we give an example for each doubly-covered flat polygon, and for two 
non-simplicial polyhedra.  
It remains open whether all the non-simplicial polyhedra can be constructed as well (four polyhedra are in question, see Figure~\ref{fig:missing}). Note that we do not characterize these polyhedra in terms of side lengths, as opposed to the case of polygons. Characterization in terms of the Gaussian curvature of the vertices yields another interesting  open question (see Open Question~\ref{open:def-types}).

\begin{itemize}
\item[(i)] Tetrahedron.
\item[(ii)] Hexahedron with 5 vertices (3 vertices of degree 4, and 2 vertices of degree 3), and 6 triangular faces.
\item[(iii)] Pentahedron with 5 vertices (1 vertex of degree 4, 4 vertices of degree 3), one quadrilateral face and 4 triangular faces. In our example, it is a right rectangular pyramid. 
\item[(iv)] Octahedron with 6 vertices of degree 4 each, and 8 triangular faces. In our example, it is a regular octahedron. 
\item[(v)] Octahedron with 6 vertices (two of which are of degree 5, two of degree 4, and two of degree 3) and 8 triangular faces.
\item[(vi)] Pentahedron with 6 vertices of degree 3 each, and 5 faces (2 triangles, and 3 quadrilaterals). In our example, it is a triangular prism. 
\end{itemize}

\section{ Enumerating the gluings of regular hexagons} 
Hoping to get some insight on this problem, we used a computer program to enumerate the non-isomorphic gluings of regular hexagons. 
Observe that the number of gluings for $n$ hexagons is polynomial in $n$. This is because each face is a polygon  with at most 6 vertices drawn on a hexagonal grid of diameter at most $n$
and the number of faces is constant (which is a generalization of the above argument for doubly-covered polygons). It would be interesting to obtain a tighter bound, see Open Problem~3.

We enumerate non-isomorphic gluings  separately for each fixed number $n$ of hexagons. We first produce all the non-isomorphic 
gluings of $n$ hexagons, whose dual graph are trees. Each such gluing is homeomorphic to a disk. 
Then for every gluing,  we iteratively glue together the pairs of consecutive edges on its boundary, for which the vertex separating them has degree three. Such pairs of edges are guaranteed to be glued to each other in every full gluing produced from the given partial gluing. 
After removing the isomorphic gluings produced by the previous step, 
we compute all possible full gluings (i.e., the ones homeomorphic to a sphere). 
For the last step, we use a modification of the dynamic 
programming algorithm to decide whether a given simple polygon has an edge-to-edge gluing~\cite{lo96-dynprog}.

Table~1 summarizes the results of our experiments so far: The second column gives  the number of non-isomorphic gluings of $n$ regular hexagons, for $n$ between 1 and 7. 
The third and forth columns characterize the non-flat shape types for the gluings of at most 4 hexagons.

\begin{table} 
\begin{tabular}{| c |c | l | l|   }
\hline
 $n$ & \hspace{1em} \# of gluings \hspace{1em} & \hspace{1em} \# of non-flat shapes \hspace{1em} & \hspace{1em} types of non-flat shapes \hspace{1em} 
\\
\hline
\hspace{1em} 1 \hspace{1em}   & 2 & 0 & -\\  
 2 & 4 & 1 & (i)\\
 3 & 6 & 3 & (i), (ii), (vi)\\
 4 & 11 & 6 & (i)*2, (ii), (iv)*2, (v) \\
 5 & 10 & 6 & (i)*2, (ii)*2, (v), (vi) \\
 6 & 17 & &\\
 7 & 18 & & \\ \hline
\end{tabular}
\vspace{1em}
\caption{ Results of our experiments to enumerate and analyze gluings of a fixed number of hexagons.}
\end{table}

\paragraph{\bf Open questions.} This paper raises a number of open questions.  
\begin{enumerate}
\item Can the graph structures of convex polyhedra shown in Figure~\ref{fig:missing} be realized by gluing regular hexagons?

\begin{figure}
\centering
\includegraphics[scale=1]{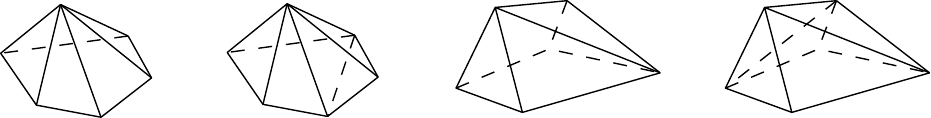}
\caption{The graph structures of convex polyhedra, for which we do not know whether they can be realized by gluing regular hexagons.}
\label{fig:missing}
\end{figure}

\item \label{open:def-types} Are all possible types of vertices according to Gaussian curvature realizable for an graph structure? 
We note that by symmetry of the graph structures and the fact that only two types of vertices exist, this question for the known graph structures reduces to  the question whether the shape of type (iii) is realizable with Gaussian curvature $2\pi/3$ at the vertex that is not incident to the quadrilateral face. See Figure~\ref{fig:def-types}.
 
\begin{figure}
\centering
\includegraphics[scale=0.75]{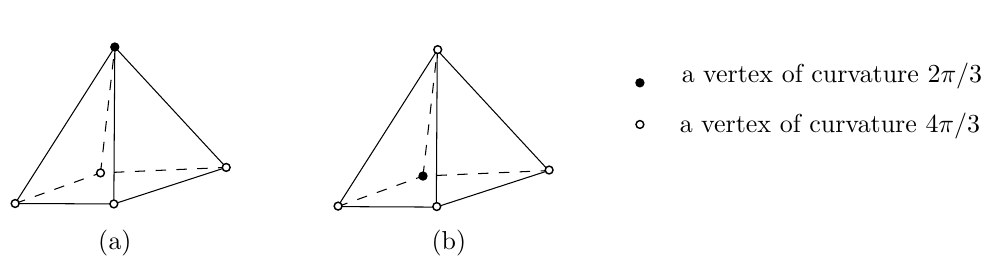}
\caption{Illustration for the open question~\ref{open:def-types}: the shape of type (iii) in the two different variations, (a) our example form Figure~\ref{fig:polyhedra} and (b) the variation for which it is not known whether it is realizable.}
\label{fig:def-types}
\end{figure}

\item 
Bound the number of different 
shapes (flat or non-flat) of a fixed type as a function of the number $n$ of hexagons glued to obtain the shape. 
In the paper we argue that this number is polynomial in $n$, but deriving a tighter bound or even the exact formula is open.  
\end{enumerate}

\paragraph{\bf Acknowledgements.} We wish to thank Jason Ku for providing us the shape (iii) in Figure~\ref{fig:polyhedra},
David Eppstein for posing Open Question~\ref{open:def-types}, and the anonymous referee for valuable comments.

Elena Arseneva was supported in part by F.R.S.-FNRS, 
the SNF Early PostDoc Mobility grant P2TIP2-168563, and she is a research group member of a winner of the competition ``Leader'' 2019 of the Foundation for the Advancement of Theoretical Physics and Mathematics ``BASIS''. 
Stefan Langerman is directeur de recherches du F.R.S.-FNRS.

\end{document}